\documentclass[preprint,12pt,authoryear]{elsarticle}

\usepackage{amssymb, amsmath, amsthm}
\usepackage{enumerate}
\usepackage{hyperref}
\usepackage{microtype}

\usepackage[all]{xy}

\usepackage{tikz}
\usetikzlibrary{shapes}
\usetikzlibrary{arrows}


\newtheorem{theorem}{Theorem}
\newtheorem{lemma}[theorem]{Lemma}
\newtheorem{prop}[theorem]{Proposition}
\newtheorem{fact}[theorem]{Fact}
\theoremstyle{definition}
\newtheorem{definition}[theorem]{Definition}

\newcommand{\Open}{\mathsf{Open}}
\newcommand{\TwoOpen}[1]{2\Open_{#1}}

\newcommand{\isdef}{\mathrel{:=}}
\newcommand{\Set}{\mathsf{Set}}
\newcommand{\id}{\mathrm{id}}
\newcommand{\now}{\mathrm{now}}
\newcommand{\ltr}{\mathrm{ltr}}
\newcommand{\nil}{\epsilon}
\newcommand{\hd}{\mathrm{hd}}
\newcommand{\tl}{\mathrm{tl}}
\newcommand{\cons}{\mathrel{:\mkern-1.5mu :}}
\newcommand{\bra}[1]{\left\langle #1 \right\rangle}
\newcommand{\brak}[1]{\left( #1 \right)}

\newcommand{\equilib}{E}
\newcommand{\iter}[1]{{#1}_{\G_\omega}}
\newcommand{\Pow}{\mathcal{P}}

\newcommand{\G}{\mathcal{G}}
\newcommand{\HH}{\mathcal{H}}

\newcommand{\One}{\mathbf{1}}

\newcommand{\equiop}{\Phi}
\newcommand{\unfo}{\mathrm{unf}}

\newcommand{\kmap}[1]{{#1}^*}

\newcommand{\ra}{\rightarrow}
\newcommand{\raS}[1]{\xrightarrow{#1}}



\begin{document}

\journal{Theoretical Computer Science}

\begin{frontmatter}

\title{A Compositional Treatment of Iterated Open Games}

\author[strath]{Neil Ghani}
\author[strath]{Clemens Kupke}
\author[strath]{Alasdair Lambert}
\author[strath]{Fredrik Nordvall Forsberg}

\address[strath]{University of Strathclyde}

\begin{abstract}
  Compositional Game Theory is a new, recently introduced model of
  economic games based upon the computer science idea of
  compositionality. In it, complex and irregular games can be built up
  from smaller and simpler games, and the equilibria of these complex
  games can be defined recursively from the equilibria of their
  simpler subgames. This paper extends the model by providing a final
  coalgebra semantics for infinite games. In the course of this, we
  introduce a new operator on games to model the economic concept of
  {\em subgame perfection}.
\end{abstract}
\begin{keyword}
Compositional game theory; Final coalgebra semantics; Infinite iterated games; subgame perfection.
\end{keyword}

\end{frontmatter}

\section{Introduction}
\label{sec:intro}

Compositionality, where one sees complex systems as being built from
smaller subsystems, is widely regarded within computer science as best
practice. As the subsystems are smaller, they are easier to
reason about, and compositionality also promotes modularity and reuse;
a particular system can be a subsystem of many different supersystems.
Can compositionality be applied also to economic games? In general,
not all reasoning is compositional, especially if significant emergent
behaviour is present in a large system but not in its subsystems.
This is unfortunately the case for economic games.  For example, if
$\sigma$ is an optimal strategy for $G$, then is $\sigma$ part of an
optimal strategy for $G \ast H$, where $G \ast H$ is a super-game
built from $G$ and $H$?  Clearly not, e.g.\ the {\em Iterated
  Prisoners' Dilemma} has equilibria --- such as cooperative
equilibria --- that do not arise from repeatedly playing the Nash
equilibrium from the {\em Prisoners' Dilemma}
\citep{evolutionOfCooperation}.

However, \citet{compositionalGames} produced a compositional model of
game theory which included a limited set of operators for building new
games from old. There was no operator to compositionally build the
infinite prisoners' dilemma from the one-shot prisoners dilemma and,
more generally, to compositionally build infinite iterations of
games. This paper addresses that problem. Within programming language
theory, these sort of issues are tackled by final coalgebra semantics
\citep{coalgebra} and we follow this practice, with
the added benefit of bringing related bisimulation techniques to the
game theory community. In doing this, we deal with a number of issues:
\begin{itemize}
\item Each round of an infinite game produces utility. Traditionally,
  this infinite sequence of staged utilities is combined into a single
  utility in one of a number of ad hoc manners. We take the bolder
  approach of not requiring the choice of a single mechanism for
  combining utilities.
\item The general approach of Compositional Game Theory deals with a
  new concept of coutility: if utility is gained by one agent, it must
  come from another agent. However, this produces problems for
  modelling infinite games, and so we make a simplifying assumption
  with respect to coutility. This is not a limitation in practice,
  as standard treatments do not consider coutility.
\item The coalgebraic approach we advocate dovetails well with the
  economic concept of {\em subgame perfection} where a strategy must
  be an optimal response in all subgames of the supergame \citep{subgamePerfect}.
\end{itemize}

\paragraph{Related Work}

An introduction to the economic treatment of iterated games can be
found in \citet{iteratedGames}. The fundamental concept of game
theory is that of Nash equilibrium \citep{nash}, which has been adapted
for the study of repeated and dynamic games to the concept of subgame
perfect equilibrium first introduced by \citet{selten}. Significantly
influential work on using logical methods and coalgebraic reasoning in
economics include \citet{rationalEscalation} and
\citet{coalgebraicGames}. Open games are also closely related to the
`partially defined games' of \citet{ramseysTheorem}.

\paragraph{Structure of the paper} Section~\ref{sec:prelim} consists
of preliminaries and a summary of previous work on open games;
Section~\ref{sec:fun} introduces a modality for dealing with subgame
perfection; Section~\ref{sec:2cells} introduces morphisms between
games, and Section~\ref{sec:final} consists of our final coalgebra
semantics for infinite open games. Finally
Section~\ref{sec:conclusion} contains concluding remarks and
discussions of further work.

\section{Preliminaries}
\label{sec:prelim}

The key concept of \citet{compositionalGames} is the following:
\begin{definition}[Open Game]
Let $X$, $Y$, $R$ and $S$ be sets. An \emph{open game} $\G = (\Sigma_\G, P_\G, C_\G, E_\G) : (X,S) \ra (Y,R)$ consists of:
\begin{itemize}
\item a set $\Sigma_\G$ of \emph{strategy profiles},
\item a \emph{play} function $P_\G : \Sigma_\G \ra (X \ra Y)$,
\item a \emph{coutility} function $C_\G : \Sigma_\G \ra (X \times R \ra S)$, and
\item an \emph{equilibrium function} $E_\G :  X \times (Y \ra R) \ra \Pow \Sigma_\G$.
\end{itemize}
\end{definition}

\noindent We sometimes write $\G:(X,S)\raS{\Sigma} (Y,R)$ to make the set of strategies explicit.
Intuitively, the set $X$ contains the states of the game, $Y$ the
moves, $R$ the utilities and $S$ the coutilities. The set
$\Sigma_\G$ contains the strategies we are trying to pick an optimal
one from. The play function $P_\G$ selects a move given a strategy and a state, while the coutility function $C_\G$ computes the coutility extruded from the game, given a strategy, state and utility. Finally, if $\sigma \in E_\G \, x \,k$, then $\sigma$ is an optimal strategy in state $x$ and with utility given by $k:Y\rightarrow R$.
 The main result of~\citet{compositionalGames} can be stated as follows:

\begin{theorem}
  The collection of pairs of sets, with open games
  $\G:(X,S) \rightarrow (Y,R)$ as morphisms, forms a symmetric monoidal
  category $\Open$.\footnote{Actually, one needs to quotient by the
    equivalence relation induced by isomorphism of strategies but we
    simplify presentation here by dealing with representatives directly.}
\end{theorem}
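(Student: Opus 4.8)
The plan is to exhibit all of the symmetric monoidal structure explicitly and then to verify the axioms, most of which will reduce to the corresponding facts about the cartesian structure $(\Set, \times, \One)$ once the bookkeeping of the four components of an open game is in place.

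First I would define identities and composition. For each object $(X,S)$ the identity open game $\id_{(X,S)} : (X,S) \to (X,S)$ has strategy set $\Sigma = \One$, play $P(\ast) = \id_X$, coutility $C(\ast)(x,r) = r$, and equilibrium function returning the full singleton on every input. For composing $\G : (X,S) \to (Y,R)$ with $\HH : (Y,R) \to (Z,Q)$, I would set $\Sigma_{\HH \circ \G} = \Sigma_\G \times \Sigma_\HH$, let the plays and coutilities compose by the evident forward-then-backward passes, and define the equilibria by the continuation-passing rule: $(\sigma,\tau) \in E_{\HH \circ \G}(x,k)$ iff $\sigma \in E_\G(x, k')$ and $\tau \in E_\HH(P_\G(\sigma)(x), k)$, where the induced continuation $k' : Y \to R$ sends $y$ to $C_\HH(\tau)(y, k(P_\HH(\tau)(y)))$. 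The identity laws are then immediate, since composing with a singleton strategy leaves the plays and coutilities fixed and the equilibrium condition for $\One$ is vacuous; note that associativity only holds strictly after passing to the quotient of the footnote, since $(\Sigma_1 \times \Sigma_2) \times \Sigma_3$ and $\Sigma_1 \times (\Sigma_2 \times \Sigma_3)$ are merely isomorphic.

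Next I would equip $\Open$ with its monoidal structure. The tensor on objects is $(X_1,S_1) \otimes (X_2,S_2) = (X_1 \times X_2, S_1 \times S_2)$, with unit $(\One, \One)$; on morphisms $\G_1 \otimes \G_2$ runs the two games in parallel, taking $\Sigma_{\G_1} \times \Sigma_{\G_2}$ as strategies and acting componentwise on plays and coutilities. The subtle point is the equilibrium function: a continuation $k : Y_1 \times Y_2 \to R_1 \times R_2$ does not split into component continuations, so I would freeze each game at the other's current move, defining the first induced continuation by $k_1(y_1) = \pi_1\brak{k(y_1, P_{\G_2}(\sigma_2)(x_2))}$ and symmetrically for $k_2$, and then ask that $\sigma_i \in E_{\G_i}(x_i, k_i)$. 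The associator, unitors and symmetry are the ``thin'' open games with singleton strategy sets whose plays and coutilities are the structural isomorphisms of $(\Set, \times, \One)$.

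Finally I would verify the coherence and symmetry axioms. Because these structural morphisms are thin and their plays and coutilities are exactly the cartesian isomorphisms, the pentagon, triangle, hexagon and the involutivity of the symmetry reduce to the known coherence of $(\Set, \times, \One)$, after checking that both sides agree on equilibria --- which they do, as each returns the full singleton. The hard part will be the bifunctoriality of $\otimes$, namely the interchange law $(\G_2 \otimes \HH_2) \circ (\G_1 \otimes \HH_1) = (\G_2 \circ \G_1) \otimes (\HH_2 \circ \HH_1)$: here the freezing in the tensor's equilibrium function must be reconciled with the forward-backward continuation passing of composition, and showing that both sides impose the same condition on a shared strategy profile is the most delicate computation. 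I expect this, together with associativity of composition, to be where essentially all of the real work lies.
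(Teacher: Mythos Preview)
Your overall plan matches the paper's: define composition and tensor explicitly, take the structural isomorphisms as ``thin'' games with trivial strategy sets, and let the coherence axioms reduce to those of $(\Set,\times,\One)$. In fact you are considerably more detailed than the paper, which only sketches the composition and tensor data and leaves all verifications implicit; your remarks about associativity only holding after the quotient, and about interchange being the delicate point, are accurate and helpful.

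There is, however, one substantive discrepancy with the paper's construction. In your equilibrium clause for composition you require
\[
  \tau \in E_\HH\bigl(P_\G(\sigma)(x),\,k\bigr),
\]
i.e.\ the second-stage strategy need only be optimal at the state produced by the \emph{chosen} first-stage strategy $\sigma$. The paper instead demands
\[
  \sigma_2 \in E_\HH\bigl(P_\G(\sigma')(x),\,k\bigr)\quad\text{for \emph{all} } \sigma' \in \Sigma_\G,
\]
so the second component must be optimal against every possible first-stage deviation. Both conventions do yield symmetric monoidal categories, so your argument is not wrong as such, but it builds a different category $\Open$ from the one the paper is describing. The paper's ``for all $\sigma'$'' clause is precisely the subgame-perfection ingredient that is later exploited via the $A \to \HH$ modality and the functor $F_\G$, so for consistency with the rest of the paper you should adopt their stronger equilibrium condition for sequential composition. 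After this correction your proof outline goes through unchanged; the identity and associativity checks, the tensor equilibrium (which you state exactly as the paper does), and the coherence reductions all work the same way.
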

\begin{proof}
  The composition of $\G$ and $\HH$ is given by the game with
  strategies
  $\Sigma_{\HH \circ \G} = \Sigma_{\G} \times \Sigma_{\HH}$, play
  function the composition of the respective play functions from $\HH$
  and $\G$, and coutility function the composition in reverse of the
  coutility functions from $\HH$ and $\G$, using the play function of
  $\G$ to produce a state for $\HH$. Finally
  $(\sigma_1, \sigma_2) \in E_{\HH \circ \G}\,x\,k$ if and only if
  $\sigma_1 \in E_{\G}\, x\, k'$, where
  $k'\,y = C_{\HH}\,\sigma_2\, y\, (k\, (P_{\HH}\,\sigma_2\,y))$, and
  $\sigma_2 \in E_{\HH}\, (P_{\G}\,\sigma'\,x)\, k$ for all
  $\sigma' \in \Sigma_{\G}$.

  The monoidal product is given by Cartesian product in the category
  of sets, with componentwise action on the strategies, play functions
  and coutility functions of open games, and
  $(\sigma_1, \sigma_2) \in E_{\G \otimes H}\,(x_1, x_2)\,k$ if and
  only if
  $\sigma_1 \in E_{\G}\,x_1\,((\pi_1 \circ k)(\_,
  P_{\HH}\,\sigma_2\,x_2))$ and
  $\sigma_2 \in E_{\HH}\,x_2\,((\pi_2 \circ k)(P_{\G}\,\sigma_1\,x_1,
  \_))$. The unit of this monoidal structure is $(\One,\One)$, while the
  symmetry is inherited from the Cartesian product in $\Set$.
\end{proof}

\section{Subgame-Perfection and Conditioning}
\label{sec:fun}

Intuitively, we play two rounds of a game by composing the game with
itself. However, this is not quite right: in the composite game
$\Sigma_{\HH \circ \G} = \Sigma_\HH \times \Sigma_\G$, and thus the
second game $\HH$ cannot react to the moves played by the first game
$\G$. This clearly does not match practice. Rather than introduce a
new form of composition, we introduce a modality which allows us to
condition a game to react to every possibility in some set $A$.

\begin{definition}
  Let $A$ be a set. Given a game $\HH:(X,\,S)\raS{\Sigma} (Y,R)$, we
  define the game
  $ A\ra \HH : (A\times X,\,S) \raS{A \ra \Sigma} (A \times Y,\,R) $ by
\begin{itemize}
\item the play function
  $ P_{A \ra \HH}\,(a,x)\,(f:A\ra \Sigma_{\HH}) \,=\,(a,\,P_{\HH}\,x\,(fa))$
\item the coutility function
  $ C_{A \ra \HH}\, (a,x) \,f\,r\,=\,C_{\HH}\,x\,(fa)\, r$
\item the equilibrium function
\[
 f\in E_{A \ra \HH}(a,x)\,(k:A\times Y \ra R)  \;\; \mbox{ iff }\;\; (\forall a' \in A) \,\, fa' \, \in \, E_{\HH}\, x\,\, k(a',\_)
\]
\end{itemize}
\end{definition}

Note how a strategy in $A \ra \HH$ is a set of strategies, one for
each element of $A$, and that for a strategy $f$ to be optimal in
$A \rightarrow \HH$, each of its components must be optimal in
$\HH$. This captures the notion of subgame-perfection. Clearly we
have:

\begin{lemma}
  The mappings $(X,S) \mapsto (A\times X,S)$ and
  $\HH \mapsto A \rightarrow \HH$ define a functor
  $A \rightarrow \_ : \Open \ra \Open$. \qed
\end{lemma}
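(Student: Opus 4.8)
The plan is to verify that the assignment $A \rightarrow \_$ satisfies the two functor laws: preservation of identities and preservation of composition. First I would pin down what the functor does on objects and morphisms. On objects it sends $(X,S)$ to $(A \times X, S)$, which is unambiguous. On morphisms it sends a game $\HH : (X,S) \raS{\Sigma} (Y,R)$ to $A \rightarrow \HH$, whose data are spelled out in the preceding definition; crucially the strategy set becomes $A \rightarrow \Sigma_{\HH}$, the play and coutility functions act pointwise in the $A$-component, and the equilibrium condition quantifies over all $a' \in A$.

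Next I would check functoriality on identities. The identity open game on $(X,S)$ has strategy set $\mathbf{1}$, play function the identity on $X$, and trivial coutility; I would compute $A \rightarrow \id_{(X,S)}$ directly from the definition and confirm that its strategy set $A \rightarrow \mathbf{1} \cong \mathbf{1}$, its play function, and its equilibrium function all coincide with those of $\id_{(A \times X, S)}$. This is a routine unwinding, so I would state it as an immediate check rather than belabour it.

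The substantive step is preservation of composition: given $\G : (X,S) \raS{\Sigma_\G} (Y,R)$ and $\HH : (Y,R) \raS{\Sigma_\HH} (Z,T)$, I must show $A \rightarrow (\HH \circ \G)$ and $(A \rightarrow \HH) \circ (A \rightarrow \G)$ agree as open games. On strategies I would exhibit the canonical isomorphism $A \rightarrow (\Sigma_\G \times \Sigma_\HH) \cong (A \rightarrow \Sigma_\G) \times (A \rightarrow \Sigma_\HH)$ (the footnote to the main theorem means we only need equality up to such strategy isomorphism), then verify that play and coutility functions transport correctly across it by direct computation using the explicit composite formulas from the proof of the main theorem. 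The main obstacle I anticipate is matching the two \emph{equilibrium functions}: I would expand $f \in E_{A \rightarrow (\HH \circ \G)}(a,x)\,k$ into a $\forall a'$ quantifier over the composite equilibrium condition of $\HH \circ \G$, and separately expand membership in $E_{(A \rightarrow \HH) \circ (A \rightarrow \G)}$, and then argue the two quantified statements are logically equivalent. The delicate point is the interaction between the universal quantifier over $A$ introduced by the modality and the universal quantifier over $\Sigma_\G$ appearing in the composite equilibrium condition (the ``for all $\sigma' \in \Sigma_\G$'' clause), together with the continuation $k'$ built from $C_\HH$ and $P_\HH$; because the modality threads the same $a'$ through both factors pointwise, these two quantifiers commute cleanly and the continuations match, so the equivalence goes through, but checking it carefully is where the real work lies.
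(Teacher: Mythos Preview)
The paper does not actually prove this lemma: the trailing \qed{} after the statement, together with the prefatory ``Clearly we have'', signals that the verification is omitted as routine. Your plan is precisely the natural elaboration of what the paper leaves implicit, and it is correct.

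One small refinement to your description of the equilibrium check: it is not quite that ``the modality threads the same $a'$ through both factors''. In the composite $(A \to \HH) \circ (A \to \G)$, the deviation quantifier in the second clause ranges over all $f_1' : A \to \Sigma_\G$, but only the value $f_1'(a)$ at the \emph{fixed} input $a$ enters the state $P_{A \to \G}\,f_1'\,(a,x) = (a,\,P_\G\,(f_1'(a))\,x)$ passed to $E_{A \to \HH}$; the equilibrium of $A \to \HH$ then ignores that $a$-component and quantifies over a fresh $a''$ independently. The net effect is still the quantifier pair $\forall a'\,\forall \sigma' \in \Sigma_\G$ that you need to match the other side, so the equivalence goes through exactly as you anticipate --- just be precise about which $a$ is which when you write it out.
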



\section{2-Cells and Coutility Free Games}
\label{sec:2cells}

Fundamentally, if we have a game $\G:(X,S) \raS{\Sigma} (Y,R)$, its
infinite iteration $\G_\omega$ will be constructed compositionally as
the final coalgebra of the functor $F_\G$ defined by
$\HH \mapsto (Y \ra \HH) \circ \G$. However, this means that games
will acquire universal properties and thus we need a notion of
morphism between games.  Further, $\G_\omega$ will satisfy
$\G_\omega \cong (Y \ra \G_\omega) \circ \G$, and hence the equation
\[
C_{\G_\omega} \, x \, \sigma \, r = C_\G \, x \, \sigma_0 \, (C_{\G_\omega} \, x' \, \sigma_1 \, r)
\]
relating coutility of $\G_\omega$ and coutility of $\G$ must
hold. Here, the strategy $\sigma$ for $\G_\omega$ decomposes into
$\sigma_0$ for the first round and $\sigma_1$ for later rounds, and
$x'$ is the state after the first round is completed. This equation
does not always have a unique solution --- for instance if
$C_\G \, x \, \sigma \, r = r$.  Hence, to recover uniqueness, we
restrict to games $\G$ where $C_\G \, x \, \sigma \, r = r$ in this
paper. This is not a great restriction as in standard game theory
there is no coutility. For the sake of presentation, we will also only
consider state free games. Next, for $F_\G$ to type check, the type of
utility and coutility of $\G$ must be the same, and thus we fix some
set $R$ and only consider games whose utility and coutility is $R$. To
summarise, in this paper we consider open games
$\G : (\One, R) \raS{\Sigma} (Y, R)$ with state $\One$, utility and
coutility the set $R$, and coutility function $C\,\sigma\,r = r$. We
define morphisms between such games as follows:

\begin{definition}
  Let $R$ be a set. Given two games
  $\G: (\One, R) \raS{\Sigma} (Y, R)$ and
  $\G':(\One, R) \raS{\Sigma'} (Y', R)$, a morphism
  $\alpha: \G \ra \G'$ consists of a pair of functions
  $\alpha = {(\alpha_{Y}:\,Y \ra \, Y', \alpha_{\Sigma}:\,\Sigma \ra
    \Sigma')}$ such that
\begin{enumerate}[(i)]
\item $\alpha_{Y}(P \, \sigma)\,=\,P' \, (\alpha_{\Sigma} \, \sigma)$, and
\item for every $\sigma \in \Sigma$ and $k:Y'\ra R$,
if $ \sigma \in E \,(k \circ \alpha_Y)$ then $\alpha_{\Sigma}(\sigma)\in E' \, k$.
\end{enumerate}
\end{definition}

We trust the reader will not be confused by the fact that games are
morphisms in $\Open$ but also have morphisms between them --- this
simply reflects inherent 2-categorical structure.  The category whose
objects are open games $\G : (\One, R) \raS{\Sigma} (Y, R)$ for some
$\Sigma$, $Y$ (and a fixed $R$), and whose morphisms are the morphisms
between such open games is denoted $\TwoOpen{R}$.  We are now in position
to define the functor $F_\G:\TwoOpen{R} \ra \TwoOpen{R}$ whose final
coalgebra will be the infinite iteration of the game $\G$.

\begin{theorem}
  Let $R$ be a set and $\G : (\One, R) \raS{\Sigma} (Y, R)$. The
  mapping $F_\G \HH = (Y \ra \HH) \circ \G$ defines a functor
  $F_\G : \TwoOpen{R} \ra \TwoOpen{R}$.
\end{theorem}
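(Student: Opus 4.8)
The plan is to show that $F_\G$ is a well-defined functor on $\TwoOpen{R}$, which means verifying three things: that $F_\G$ sends objects to objects, that it sends morphisms to morphisms, and that it preserves identities and composition. The key observation is that $F_\G$ is defined as a composite of two operations we have already studied, namely postcomposition $(Y \ra \_)$ --- which by the lemma in Section~\ref{sec:fun} is a functor $\Open \ra \Open$ --- and precomposition with the fixed game $\G$ in the monoidal category $\Open$. So the natural strategy is to leverage this existing structure, but with a crucial caveat explained below.

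First I would check the action on objects. Given $\HH : (\One, R) \raS{\Sigma} (Y, R)$, the game $Y \ra \HH$ has type $(Y \times \One, R) \ra (Y \times Y, R)$, and composing with $\G : (\One, R) \ra (Y, R)$ on the right yields a game $(\One, R) \ra (Y \times Y, R)$. I must confirm this lands back in $\TwoOpen{R}$: the state remains $\One$ (using $Y \times \One \cong \One$ appropriately, or rather noting that $\G$ supplies the state), the utility and coutility sets are both $R$, and the coutility function of the composite remains $C\,\sigma\,r = r$ --- this last point follows because both $\G$ and $Y \ra \HH$ have the trivial coutility function, and composition of trivial coutility functions is again trivial, so the object-level constraints defining $\TwoOpen{R}$ are preserved.

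Next I would define the action on morphisms. Given $\alpha = (\alpha_Y, \alpha_\Sigma) : \HH \ra \HH'$, I need to produce $F_\G \alpha : F_\G \HH \ra F_\G \HH'$, i.e.\ a pair of functions on the move-sets and strategy-sets of the composites. Here the move-set of $F_\G\HH$ is $Y \times Y$ and the strategy-set is $\Sigma_\G \times (Y \ra \Sigma)$; the natural candidates are $\id_Y \times \alpha_Y$ on moves and $\id_{\Sigma_\G} \times (\alpha_\Sigma \circ \_)$ on strategies, reflecting that $\G$ is untouched while $\HH$ is transported along $\alpha$. I would then verify the two conditions (i) and (ii) from the definition of morphism. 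Condition (i), the compatibility of the play functions, is a routine diagram chase using the definitions of the play functions for $(Y \ra \_)$ and for composition, together with condition (i) for $\alpha$ itself.

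The main obstacle, and the step requiring genuine care, is condition (ii): the equilibrium-preservation condition. Unlike the play function, the equilibrium function $E$ is contravariant-flavoured and quantified, so the preservation of equilibria by $\alpha$ does not transport mechanically through the composite. Concretely, I must show that if a strategy of $F_\G \HH$ is an equilibrium against the continuation $k \circ (\id_Y \times \alpha_Y)$, then its image is an equilibrium of $F_\G\HH'$ against $k$. Unfolding the equilibrium function of the composite $(Y \ra \HH) \circ \G$ via the formulas from the main theorem and from the definition of $A \ra \HH$ produces a universally quantified statement about the $\HH$-components being equilibria in $\HH$; I would then feed condition (ii) for $\alpha$ pointwise to each component, being careful that the continuation $k$ is reindexed correctly through the trivial coutility of $\G$ and through the $(Y \ra \_)$ modality, so that the hypothesis of (ii) is exactly met at each stage. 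The trivial-coutility assumption is what keeps this reindexing transparent and is precisely why it was imposed. Once both conditions are established, functoriality (preservation of $\id$ and of composition) follows immediately since $F_\G$ acts componentwise by identities and by pre/postcomposition, which respect identities and composition of functions.
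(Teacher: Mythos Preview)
Your approach is correct and is essentially the paper's: it too defines $F_\G(\alpha)$ on strategies by $(\sigma, f) \mapsto (\sigma, \alpha_\Sigma \circ f)$ and on moves by $(y, z) \mapsto (y, \alpha_Y\, z)$, and then declares the play-function and equilibrium-preservation conditions ``easily checked.'' You supply considerably more detail than the paper's two-line proof and rightly single out the equilibrium condition as the step needing care; one small notational slip is that you overload $Y$ and $\Sigma$ for both $\G$ and $\HH$ --- the move-set of $F_\G\HH$ is $Y \times Y_\HH$, not $Y \times Y$.
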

\begin{proof}
Given a morphism $\alpha: \HH \ra \HH'$, we define
$F_\G(\alpha) :  F_\G \HH \ra F_\G \HH'$ by
\[
(F_\G(\alpha))_\Sigma\,(\sigma,f) = (\sigma, \alpha_\Sigma \circ f)
\qquad\qquad
 (F_\G(\alpha))_Y\,(y, z) = (y, \alpha_Y\,z)
\]
The play function and equilibrium preservation conditions are easily
checked.
\end{proof}


\section{The iterated game as a final coalgebra}
\label{sec:final}

From now on, let $R$ be an arbitrary set, used as utility and
coutility for all our games, and write $\TwoOpen{}$ for $\TwoOpen{R}$.

\subsection{Definition of the iterated game}

Let us fix an arbitrary open game $\G : (1,R) \raS{\Sigma} (Y,R)$ that
we want to iterate infinitely often via the final coalgebra of the
functor $F_\G: \TwoOpen{} \ra \TwoOpen{}$ from the previous section, mapping
$\HH: (1,R) \raS{\Sigma_{\HH}} (Y_\HH,R)$ to
${(Y \ra \HH) \circ \G}: (1,R) \raS{\Sigma \times (Y \ra \Sigma_{\HH})}
(Y \times Y_\HH,R)$.  We first describe $F_\G$-coalgebras, then our
candidate $\G_\omega$ for the final $F_\G$-coalgebra, and conclude with a
proof that $\G_\omega$ really is final.  As a first step we need to
recall two endofunctors on the category of sets and their final
coalgebras.

\begin{fact}
  Given two sets $I$ and $O$ we let $D(I,O): \Set \ra \Set$ be the
  functor given by $D(I,O) X \isdef O \times X^I$ and by
  $D(I,O) (f:X \ra Y) \isdef \id_O \times f^I$.  Furthermore, for a
  set $Y$, we define the functor $S(Y): \Set \ra \Set$ by putting
  $S(Y) X = Y\times X$ and $S(Y)(f:X \ra Y) \isdef \id_Y \times
  f$. The final $D(I,O)$-coalgebra is
  \[
    \xymatrix{(I^* \ra O) \ar[rr]^-{\left\langle \now,\ltr\right\rangle}
              &&
              O \times (I^* \ra O)^I}
  \]
  where $\now(f) \isdef f(\nil)$ and $\ltr(f)=\lambda i. \lambda w. f(iw)$
  (cf.~\citet[Ex.~9.5]{rutt00:univ}\footnote{{\em Loc.cit.} proves this for $I = 2$ but the argument can be easily adapted for arbitrary $I$.}). The final $S(Y)$-coalgebra is
  \[
    \xymatrix{Y^\omega \ar[rr]^-{\left\langle \hd, \tl \right\rangle}
              &&
              Y \times Y^\omega}
  \]
  where $Y^\omega$ is the set of infinite streams over $Y$,
  $\hd(y_0 y_1 \dots) \isdef y_0$, i.e., $\hd$ maps a stream to its
  first element (its ``head'') and
  $\tl(y_0 y_1 y_2 \dots) \isdef y_1 y_2 \dots$, i.e., $\tl$ maps a
  stream to its tail (cf.~\citet[Ex.~9.4]{rutt00:univ}).
\end{fact}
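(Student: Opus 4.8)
The plan is to prove the statement about $D(I,O)$ directly, and then recover the stream result as the special case $I = 1$, $O = Y$. Indeed $1^* \cong \mathbb{N}$ and $X^1 \cong X$, so $D(1,Y) \cong S(Y)$ and $(1^* \ra Y) \cong Y^\omega$, and under these isomorphisms $\now$ corresponds to $\hd$ and $\ltr$ to $\tl$. Thus the whole claim reduces to establishing finality of $\bra{\now, \ltr}$, which is what I would concentrate on.

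First I would unpack the coalgebra data. A $D(I,O)$-coalgebra $c : X \ra O \times X^I$ splits into an observation $o \isdef \pi_1 \circ c : X \ra O$ and a transition $t \isdef \pi_2 \circ c : X \ra X^I$, the latter curried to $t : X \times I \ra X$, so that $t(x,i)$ is the $i$-successor of $x$. A map $h : X \ra (I^* \ra O)$ is a $D(I,O)$-coalgebra morphism exactly when $\bra{\now, \ltr} \circ h = (\id_O \times h^I) \circ c$. Projecting this equation onto its two factors and using the definitions of $\now$ and $\ltr$, it is equivalent to the two conditions
\[
h(x)(\nil) = o(x), \qquad h(x)(iw) = h(t(x,i))(w)
\]
for all $x \in X$, $i \in I$ and $w \in I^*$.

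Next I would observe that these two equations are precisely a definition of $h(x)(w)$ by recursion on the length of $w \in I^*$: the base case $w = \nil$ is fixed by the first equation, and the step $w = iw'$ is reduced by the second equation to the strictly shorter word $w'$ evaluated at the successor state $t(x,i)$. Writing $t^* : X \times I^* \ra X$ for the iterated transition defined by $t^*(x,\nil) = x$ and $t^*(x, iw) = t^*(t(x,i), w)$, the unique solution is $h(x)(w) = o(t^*(x,w))$. One then checks, by the same recursion, that this $h$ does satisfy both displayed equations, which gives existence of a coalgebra morphism.

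For uniqueness, any coalgebra morphism must obey the very same two equations, and these determine $h(x)(w)$ uniquely by induction on the length of $w$; hence any two coalgebra morphisms agree on every word at every state and are therefore equal. The only real work, and the single place where care is needed, is the bookkeeping of the currying of $t$ together with the recursion on $I^*$, so as to see that the two equations admit a unique solution; the remainder is a routine diagram chase unfolding the definitions of $\now$, $\ltr$ and $D(I,O)(h)$. The stream case then follows from the reduction above, with the unique morphism sending a state $x$ to the stream $o(x)\,o(t(x))\,o(t^2(x))\dots$ of its successive observations.
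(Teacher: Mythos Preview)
Your argument is correct. Note, however, that the paper does not actually prove this statement: it is recorded as a \emph{Fact} and simply cited from \citet[Ex.~9.4, 9.5]{rutt00:univ}, with the remark that the $D(I,O)$ case is handled there for $I=2$ and extends to arbitrary $I$. So there is no ``paper's own proof'' to compare against beyond that reference. Your write-up supplies a self-contained argument where the paper defers to the literature; the reduction of the stream case to $D(1,Y)$ is a nice economy that the paper does not make explicit (it cites the two cases separately), and your recursion-on-word-length derivation of the unique coalgebra morphism is exactly the standard one.
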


The above final coalgebras are fundamental for our representation
of iterated games: The final $S(Y)$-coalgebra consists of
all infinite sequences of moves of the one-round game, while the final $D(Y,\Sigma)$-coalgebra
represents the set of strategies that map lists of moves --- representing moves chosen in previous rounds --- to a strategy for the next round.
As notation, for $\sigma : Y^* \ra \Sigma$ we
abbreviate $\now(\sigma)$ to $\sigma_0$, $\ltr(\sigma)$ to $\sigma'$, and
use $(\cons) : Y \times Y^\omega \ra Y^\omega$ to denote the cons-operator
on lists. Let us now define the $\omega$-iteration of $\G$.

\begin{definition}
  The $\omega$-iteration $\G_\omega : (1,R) \ra (Y^\omega,R)$ of $\G : (1,R) \ra (Y,R)$ has strategies $\iter{\Sigma}$
  given by $\iter{\Sigma} \isdef Y^* \ra \Sigma_\G$, and play
  function $\iter{P}$ given by
  \[
    \iter{P} \sigma = P_\G \sigma_0 \cons \iter{P} (\lambda z. \sigma (P_\G \sigma_0 \cons z))
  \]
 To define the equilibrium function
  $\iter{\equilib}: (Y^\omega \ra R) \ra \Pow \iter{\Sigma}$, we first
  define an operator  $\equiop : (\Pow \iter{\Sigma})^{(Y^\omega \ra R)} \ra (\Pow
  \iter{\Sigma})^{(Y^\omega \ra R)}$ by putting
  \begin{eqnarray*}
    \sigma \in \equiop \, \Gamma \, k & \mbox{ if }
      & \sigma_0 \in \equilib_\G(\lambda y. k(y\cons\iter{P} (\sigma' y))) \\
      & & \text{and} \quad \forall y' \in Y. \; \sigma' y' \in \Gamma(\lambda z. k(y'\cons z))
  \end{eqnarray*}
  Clearly $(\Pow \iter{\Sigma})^{(Y^\omega \ra R)}$
  forms a complete lattice by lifting the complete lattice structure
  of $\Pow \iter{\Sigma}$ pointwise to the function
  space. Furthermore, $\equiop$ is obviously a monotone operator on
  that complete lattice and therefore has a smallest and a greatest
  fixpoint.  We define $\iter{\equilib}$ to be the greatest fixpoint
  of $\equiop$.
\end{definition}

Notice that the above approach means we do not have to fix a
particular utility function $Y^\omega \ra R$ in advance by some
arbitrary form of discounting, but rather work with all possible
utility functions, allowing the user maximum flexibility.

\begin{lemma}
\label{lem:gfp}
  Let $\sigma \in \iter{\Sigma}$. Then
  \begin{enumerate}[(i)]
  \item for all utility functions $k: Y^\omega \ra R$ we have
    $\sigma \in \iter{\equilib} (k)$ if and only if
    $ \sigma \in \equiop(\iter{\equilib})(k) $, and
  \item for every
    $\Gamma \in (\Pow \iter{\Sigma})^{(Y^\omega \ra R)}$, if
    $\Gamma \leq \equiop (\Gamma)$ then also
    $\Gamma \leq \iter{\equilib}$.
  \end{enumerate}
  Here $\leq$ denotes the order on
  $(\Pow \iter{\Sigma})^{(Y^\omega \ra R)}$ given by
  $\Gamma \leq \Delta$ if $\Gamma(k) \subseteq \Delta(k)$ for all
  $k: Y^\omega \ra R$.
\end{lemma}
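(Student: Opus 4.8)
The plan is to recognise both statements as the defining properties of the greatest fixpoint of a monotone operator on a complete lattice, and so to obtain them from the Knaster--Tarski theorem. Since $\iter{\equilib}$ was defined as the greatest fixpoint of the monotone operator $\equiop$ on the complete lattice $(\Pow \iter{\Sigma})^{(Y^\omega \ra R)}$, the only real work is to unfold what ``greatest fixpoint'' means in each case.

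For part (i), I would use that being a fixpoint means $\iter{\equilib} = \equiop(\iter{\equilib})$ as elements of the lattice; reading this equality pointwise at a utility function $k$ gives the equality of sets $\iter{\equilib}(k) = \equiop(\iter{\equilib})(k)$, and membership of $\sigma$ on either side then yields the claimed biconditional. If one prefers not to assume the fixpoint property outright, it can be recovered from the explicit description $\iter{\equilib} = \bigvee\{\Gamma : \Gamma \leq \equiop(\Gamma)\}$: monotonicity of $\equiop$ shows that this join is itself a post-fixpoint, so $\iter{\equilib} \leq \equiop(\iter{\equilib})$, and applying $\equiop$ once more shows $\equiop(\iter{\equilib})$ is also a post-fixpoint, so $\equiop(\iter{\equilib}) \leq \iter{\equilib}$, giving the desired equality.

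For part (ii), I would invoke the coinduction principle directly: as $\iter{\equilib}$ is the join of all post-fixpoints of $\equiop$, any $\Gamma$ satisfying $\Gamma \leq \equiop(\Gamma)$ is among the elements over which that join is formed, whence $\Gamma \leq \iter{\equilib}$.

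I do not expect a genuine obstacle here: both parts are formal consequences of Knaster--Tarski once monotonicity of $\equiop$ and completeness of the lattice are available, and both facts were already recorded when $\iter{\equilib}$ was defined. The only points demanding minor care are the pointwise translation between the lattice order $\leq$ and elementwise membership in part (i), and being explicit in part (ii) that the relevant join ranges over post-fixpoints (elements $\Gamma$ with $\Gamma \leq \equiop(\Gamma)$) rather than over fixpoints proper.
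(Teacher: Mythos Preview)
Your proposal is correct and follows essentially the same approach as the paper: item~(i) because $\iter{\equilib}$ is a fixpoint of $\equiop$, and item~(ii) because it is the greatest such, hence the greatest post-fixpoint. The paper's proof is a two-line appeal to exactly these facts, while you additionally spell out the Knaster--Tarski argument underlying them, which is fine but not required.
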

\begin{proof}
  The first item follows since $\iter{\equilib}$ is a
  fixpoint of $\equiop$, the second because it is the greatest
  such, thus also the greatest post-fixpoint wrt the order
  $\leq$.
\end{proof}

\subsection{Proof of finality}

In this section we are going to show that $\G_\omega$ is a final coalgebra of the functor
$F_\G = (Y \ra \_) \circ \G: \TwoOpen{} \ra \TwoOpen{}$. We have two
things to show:
\begin{enumerate}[(i)]
\item $\G_\omega$ is an $F_\G$-coalgebra, and
\item for any other $F_\G$-coalgebra $\gamma: \HH \ra F_\G \HH$, there exists
  a {\em unique} $F_\G$-coalgebra morphism
  $\brak{\unfo_\Sigma,\unfo_Y}: \HH \ra \G_\omega$.
\end{enumerate}

The first item is formulated in the following proposition --- its straight\-forward proof can be found in the \hyperlink{proof:coalg}{appendix}.
\begin{prop}
\label{prop:iscoalg}
  The $\omega$-iteration $\G_\omega$ of $\G$ is an $F_\G$-coalgebra with
  coalgebra map $\alpha = \brak{\bra{\now,\ltr},\bra{\hd,\tl}} :
  \G_\omega \ra F_\G \G_\omega$. \qed
\end{prop}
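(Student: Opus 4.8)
The plan is to verify directly the two conditions in the definition of a morphism of games for $\alpha_\Sigma = \bra{\now,\ltr}$ and $\alpha_Y = \bra{\hd,\tl}$. First I would record the types. Since $\iter{\Sigma} = Y^* \ra \Sigma$, the map $\bra{\now,\ltr}$ sends a strategy $\sigma$ to the pair $(\sigma_0,\sigma') \in \Sigma \times (Y \ra \iter{\Sigma})$, which is exactly the strategy set of $F_\G\G_\omega = (Y \ra \G_\omega) \circ \G$, while $\bra{\hd,\tl} : Y^\omega \ra Y \times Y^\omega$ matches its move set $Y \times Y^\omega$. Both maps are precisely the structure maps of the two final coalgebras recalled above, which keeps the bookkeeping transparent.

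Before checking the two conditions I would unfold the play and equilibrium functions of $F_\G\G_\omega$ by specialising the composition clauses and the clauses for $Y \ra \_$ to our state-free setting with trivial coutility $C\,\sigma\,r = r$. Two simplifications do the real work. Trivial coutility collapses the reindexed utility $k'$ appearing in composition to $k'\,y = k(y, \iter{P}(f y))$, since the play of $Y \ra \G_\omega$ at intermediate state $y$ is $(y, \iter{P}(f y))$; and because the equilibrium of $Y \ra \G_\omega$ does not depend on its state, the universally quantified clause ranging over $\Sigma$ in the composite equilibrium reduces to a single condition. This yields that $(\sigma_0,f) \in \equilib_{F_\G\G_\omega}(k)$ holds precisely when $\sigma_0 \in \equilib_\G(\lambda y. k(y, \iter{P}(f y)))$ and $f y' \in \iter{\equilib}(k(y',\_))$ for all $y' \in Y$.

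For the play condition I would compute both sides using the recursion $\iter{P}\sigma = P_\G \sigma_0 \cons \iter{P}(\sigma'(P_\G\sigma_0))$: its head is $P_\G\sigma_0$ and its tail is $\iter{P}(\sigma'(P_\G\sigma_0))$, so $\alpha_Y(\iter{P}\sigma) = (P_\G\sigma_0, \iter{P}(\sigma'(P_\G\sigma_0)))$, which is exactly $P_{F_\G\G_\omega}(\sigma_0,\sigma') = P_{F_\G\G_\omega}(\alpha_\Sigma\sigma)$.

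The heart of the argument, and the only place I expect any real subtlety, is the equilibrium condition. By Lemma~\ref{lem:gfp}(i) the greatest fixpoint satisfies $\iter{\equilib} = \equiop(\iter{\equilib})$, so $\sigma \in \iter{\equilib}(k \circ \alpha_Y)$ unfolds into the two defining clauses of $\sigma \in \equiop(\iter{\equilib})(k \circ \alpha_Y)$. I would then cancel $\alpha_Y$ against the cons via $\hd(y \cons s) = y$ and $\tl(y \cons s) = s$: the first clause becomes $\sigma_0 \in \equilib_\G(\lambda y. k(y, \iter{P}(\sigma' y)))$ and the second becomes $\sigma' y' \in \iter{\equilib}(k(y',\_))$ for all $y'$. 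These are exactly the two conditions characterising $(\sigma_0,\sigma') \in \equilib_{F_\G\G_\omega}(k)$ obtained above, with $f = \sigma'$, so the required implication in fact holds as a biconditional. The only things one must get right are precisely these two bookkeeping points: that trivial coutility removes the coutility term, and that the state-independence of the equilibrium of $Y \ra \G_\omega$ eliminates the quantification over $\G$-strategies in the composite.
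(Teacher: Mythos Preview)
Your proof is correct and follows essentially the same approach as the paper's own proof in the appendix: check the play condition by unfolding the recursion for $\iter{P}$, and derive the equilibrium condition by invoking Lemma~\ref{lem:gfp}(i) to unfold the fixpoint and then matching the resulting two clauses against the definition of $\equilib_{F_\G\G_\omega}$. You are more explicit than the paper about why the two clauses match---specifically, that trivial coutility collapses $k'$ and that the state-independence of $\equilib_{Y\ra\G_\omega}$ makes the universal quantifier over $\Sigma_\G$ vacuous---which the paper simply absorbs into the word ``equivalent''; and your observation that the implication is in fact a biconditional is correct, though only one direction is needed.
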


We are now ready to prove that $\G_\omega$ indeed is the final
$F_\G$-coalgebra.  To this end we consider an arbitrary
$F_\G$-coalgebra $\HH$ with coalgebra map
$\brak{\bra{\now_\HH,\ltr_\HH},\bra{\hd_\HH,\tl_\HH}}$. We have to
prove that there is a morphism
$\bra{\unfo_\Sigma,\unfo_Y}: \HH \ra \G_\omega$ such that the following
diagram commutes:
\[
  \xymatrix{
    F_\G \HH \ar@{-->}[rr]^{F_\G \brak{\unfo_\Sigma,\unfo_Y}} & & F_\G \G_\omega \\
    \HH \ar[u]^{\brak{\bra{\now_\HH,\ltr_\HH},\bra{\hd_\HH,\tl_\HH}}} \ar@{-->}[rr]_{\brak{\unfo_\Sigma,\unfo_Y}} & & \G_\omega \ar[u]_{\brak{\bra{\now,\ltr},\bra{\hd,\tl}}}}
\]
It is easy to see that such a $F_\G$-coalgebra morphism --- if it
exists --- must be unique because commutativity of the above diagram implies
commutativity of the following two diagrams in the category of sets:

\begin{equation}\label{equ:setfinal}
\begin{array}{cc}
\xymatrix{\Sigma_\G \times \Sigma_\HH^Y \ar[rr]^{D(Y,\Sigma_\G) (\unfo_\Sigma)} & & \Sigma_\G \times \iter{\Sigma}^Y \\
  \Sigma_\HH \ar[u]_{\bra{\now_\HH,\ltr_\HH}} \ar[rr]_{\unfo_\Sigma} &  & \iter{\Sigma} \ar[u]_{\bra{\now,\ltr}}} &
  \xymatrix{Y \times Y_\HH \ar[rr]^{\id_Y \times (\unfo_Y)} &  & Y \times Y^\omega \\
  Y_\HH \ar[u]_{\bra{\hd_\HH,\tl_\HH}} \ar[rr]_{\unfo_Y} &  &  Y^\omega \ar[u]_{\bra{\hd,\tl}}}
\end{array}
\end{equation}

In other words $\unfo_\Sigma$ and $\unfo_Y$ have to be
$D(Y,\Sigma_\G)$- and $S(Y)$-coalgebra morphisms, respectively, and
these are uniquely determined by the fact that their codomains are
the respective final coalgebras.

This means that to show that $\G_\omega$ is a final $F_\G$-coalgebra, we
have to prove that the pair of functions $\bra{\unfo_\Sigma,\unfo_Y}$
defined via the diagrams in~\eqref{equ:setfinal} is a $F_\G$-coalgebra
morphism. We need several lemmas.

\begin{lemma}\label{lem:playfct}
  For every $\sigma \in \Sigma_\HH$ we have
  $\unfo_Y(P_\HH (\sigma)) = \iter{P}(\unfo_\Sigma (\sigma))$.
\end{lemma}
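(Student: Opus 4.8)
The plan is to prove this stream equation by coinduction on the final $S(Y)$-coalgebra $Y^\omega$, exploiting three ingredients: the defining (co)recursion equations for $\unfo_\Sigma$ and $\unfo_Y$ extracted from the two squares in~\eqref{equ:setfinal}, the play-preservation condition~(i) witnessing that the coalgebra map of $\HH$ is a morphism in $\TwoOpen{}$, and the defining equation of $\iter{P}$. First I would unfold~\eqref{equ:setfinal}: the right square says $\unfo_Y(t) = \hd_\HH(t) \cons \unfo_Y(\tl_\HH(t))$, while the left square says $(\unfo_\Sigma \sigma)(\nil) = \now_\HH(\sigma)$ together with the shift law $\lambda z.\,(\unfo_\Sigma \sigma)(y \cons z) = \unfo_\Sigma(\ltr_\HH(\sigma)(y))$, which I will call $(\ast)$.

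Next I would record what the coalgebra structure buys us. Since $\brak{\bra{\now_\HH,\ltr_\HH},\bra{\hd_\HH,\tl_\HH}}$ is a morphism $\HH \ra F_\G \HH$, condition~(i) reads $\bra{\hd_\HH,\tl_\HH}(P_\HH \sigma) = P_{F_\G \HH}(\bra{\now_\HH,\ltr_\HH}\sigma)$. Computing the play function of the composite $F_\G \HH = (Y \ra \HH) \circ \G$ on a strategy $(\tau, f)$ gives $P_{F_\G \HH}(\tau, f) = (P_\G \tau,\, P_\HH(f(P_\G \tau)))$, so condition~(i) splits into the two equations
\[
\hd_\HH(P_\HH \sigma) = P_\G(\now_\HH \sigma)
\qquad\text{and}\qquad
\tl_\HH(P_\HH \sigma) = P_\HH\big(\ltr_\HH(\sigma)(P_\G(\now_\HH \sigma))\big),
\]
which I label (a) and (b).

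Then I would verify that $\mathcal{R} = \{(\unfo_Y(P_\HH \sigma),\, \iter{P}(\unfo_\Sigma \sigma)) : \sigma \in \Sigma_\HH\}$ is an $S(Y)$-bisimulation. For the heads: by the right square $\hd(\unfo_Y(P_\HH\sigma)) = \hd_\HH(P_\HH\sigma)$, which equals $P_\G(\now_\HH\sigma)$ by~(a); and unfolding $\iter{P}$ once shows $\hd(\iter{P}(\unfo_\Sigma\sigma)) = P_\G((\unfo_\Sigma\sigma)(\nil)) = P_\G(\now_\HH\sigma)$, so the heads agree. For the tails, setting $\sigma^\ast \isdef \ltr_\HH(\sigma)(P_\G(\now_\HH\sigma))$, the right square and~(b) give $\tl(\unfo_Y(P_\HH\sigma)) = \unfo_Y(\tl_\HH(P_\HH\sigma)) = \unfo_Y(P_\HH \sigma^\ast)$, while unfolding $\iter{P}$ and then applying the shift law $(\ast)$ with $y = P_\G(\now_\HH\sigma)$ gives $\tl(\iter{P}(\unfo_\Sigma\sigma)) = \iter{P}(\unfo_\Sigma \sigma^\ast)$. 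Hence both tails form the $\mathcal{R}$-pair indexed by $\sigma^\ast$, and finality of $Y^\omega$ forces the two streams to coincide. Equivalently, one can observe that both $\unfo_Y \circ P_\HH$ and $\iter{P} \circ \unfo_\Sigma$ are $S(Y)$-coalgebra morphisms out of $(\Sigma_\HH, \sigma \mapsto (P_\G(\now_\HH\sigma), \sigma^\ast))$ and invoke uniqueness.

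The routine parts are the two single-step unfoldings of $\iter{P}$ and the computation of $P_{F_\G \HH}$. The crux, and the only place where real bookkeeping is needed, is the tail case: it succeeds precisely because the successor strategy $\sigma^\ast$ produced by the coalgebra law~(b) is the same one through which the shift law~$(\ast)$ for $\unfo_\Sigma$ rewrites the residual strategy $\lambda z.\,(\unfo_\Sigma\sigma)(P_\G(\now_\HH\sigma)\cons z)$. Getting these two occurrences of $P_\G(\now_\HH\sigma)$ to line up is the heart of the argument.
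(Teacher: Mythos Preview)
Your proof is correct and follows essentially the same bisimulation argument as the paper: you define the same relation $Q=\mathcal{R}$ and verify the head/tail conditions using the squares in~\eqref{equ:setfinal}, the play-preservation of $\HH$'s coalgebra map (your equations (a) and (b)), and one unfolding of $\iter{P}$. The only cosmetic difference is that the paper packages the tail computation for $\iter{P}(\unfo_\Sigma\sigma)$ via the play-preservation part of Proposition~\ref{prop:iscoalg} rather than by unfolding the defining equation of $\iter{P}$ directly, but this amounts to the same calculation.
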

\begin{proof}
  To see this we define a relation
  \[
    Q \isdef \{ (\unfo_Y(P_\HH (\sigma)),\iter{P}(\unfo_\Sigma
    (\sigma))) \mid \sigma \in \Sigma_\HH \} \subseteq Y^\omega \times
    Y^\omega
  \]
  and we prove that $Q$ is a $S(Y)$-bisimulation, i.e., that for each
  $(\tau_1,\tau_2) \in Q$ we have $\hd (\tau_1) = \hd(\tau_2)$ and
  $(\tl(\tau_1),\tl(\tau_2)) \in Q$. From the coinduction principle it
  follows that any two streams related by $Q$ are equal which implies
  the lemma. The proof that $Q$ is a bisimulation is contained in the
  \hyperlink{proof:bisim}{appendix}.
\end{proof}

We now turn to the verification of the equilibrium condition for
$\brak{\unfo_\Sigma,\unfo_Y}$.  First we use
$\brak{\unfo_\Sigma,\unfo_Y}$ to define an indexed predicate on
$\iter{\Sigma}$ (which can be thought of as the image of $\equilib_\HH$
under $\brak{\unfo_\Sigma,\unfo_Y}$). This predicate will be a post-fixpoint of
$\equiop$ which will then imply the desired
equilibrium condition.

\begin{definition}
\label{def:Ehat}
  We define an indexed predicate
  $\hat{\equilib}_\HH: (Y^\omega \ra R) \ra \Pow \iter{\Sigma}$ by
  putting $\sigma \in \hat{\equilib}_\HH k$ if
  $\exists \sigma' \in \Sigma_\HH$ s.t.\ $\unfo_\Sigma (\sigma')=\sigma$
  and $\sigma' \in \equilib_\HH(k \circ \unfo_Y)$.
\end{definition}

\begin{definition}
  We define a map
  $\kmap{(\_)}:(Y^\omega \ra R) \ra (Y \times Y_\HH \ra R)$ by putting
  $\kmap{k} = \lambda y. \lambda z. k(y \cons \unfo_Y(z))$.
\end{definition}

\begin{lemma}
\label{lem:equilibH}
  For $k: Y^\omega \ra R$ and $\sigma' \in \Sigma_\HH$, if
  $\sigma' \in \equilib_\HH(k \circ \unfo_Y)$, then
  \begin{enumerate}[(i)]
  \item
    $\now_\HH(\sigma') \in \equilib_\G( \lambda
    y. \kmap{k}(y,P_\HH(\ltr_\HH(\sigma')(y))))$, and
  \item for all $y' \in Y$ we have
    $\ltr(\sigma')(y') \in \equilib_\HH(\lambda z.\kmap{k}(y',z))$.
  \end{enumerate}
\end{lemma}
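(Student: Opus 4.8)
The plan is to obtain both conclusions simultaneously from the single fact that the coalgebra structure map of $\HH$ is a morphism in $\TwoOpen{}$, so in particular it satisfies the equilibrium-preservation condition~(ii) of the definition of such morphisms. To bring that condition to bear I first need an explicit description of the equilibrium function of the codomain game $F_\G \HH = (Y \ra \HH) \circ \G$, and I will then instantiate the preservation condition at the utility function $\kmap{k}$.

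First I would compute $\equilib_{F_\G\HH}$ by unfolding the composition clause and then the modality $Y \ra \_$. Writing a strategy of $F_\G\HH$ as a pair $(\sigma_1,\sigma_2) \in \Sigma_\G \times \Sigma_\HH^{Y}$, the composition clause gives $(\sigma_1,\sigma_2) \in \equilib_{F_\G\HH}(k)$ iff $\sigma_1 \in \equilib_\G(k')$ with $k'\,y = C_{Y\ra\HH}\,\sigma_2\,y\,(k\,(P_{Y\ra\HH}\,\sigma_2\,y))$, together with $\sigma_2 \in \equilib_{Y\ra\HH}(P_\G\,\sigma')(k)$ for all $\sigma' \in \Sigma_\G$. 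Coutility-freeness makes $C_{Y\ra\HH}$ act as the identity on its utility argument, and $P_{Y\ra\HH}\,\sigma_2\,y = (y, P_\HH(\sigma_2\,y))$, so $k'\,y = k(y, P_\HH(\sigma_2\,y))$. The equilibrium clause of the modality quantifies over all $y'\in Y$ and ignores the incoming state, so the second condition collapses (the outer $\forall\,\sigma'$ becoming vacuous) to $\sigma_2\,y' \in \equilib_\HH(k(y',\_))$ for every $y'$. Hence
\[
  (\sigma_1,\sigma_2) \in \equilib_{F_\G\HH}(k)
  \iff
  \sigma_1 \in \equilib_\G(\lambda y.\,k(y, P_\HH(\sigma_2\,y)))
  \ \text{and}\
  \forall y'\in Y.\ \sigma_2\,y' \in \equilib_\HH(k(y',\_)).
\]

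The remaining step is then an exact match. Because $\unfo_Y$ is an $S(Y)$-coalgebra morphism (the right-hand square of~\eqref{equ:setfinal}) we have $\unfo_Y(z) = \hd_\HH(z) \cons \unfo_Y(\tl_\HH(z))$, and so, unwinding the definition of $\kmap{k}$,
\[
  (k \circ \unfo_Y)(z) = k(\hd_\HH(z) \cons \unfo_Y(\tl_\HH(z))) = \kmap{k}(\hd_\HH(z), \tl_\HH(z)) = (\kmap{k} \circ \bra{\hd_\HH,\tl_\HH})(z),
\]
i.e.\ $k \circ \unfo_Y = \kmap{k} \circ \bra{\hd_\HH,\tl_\HH}$. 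The hypothesis $\sigma' \in \equilib_\HH(k\circ\unfo_Y)$ therefore reads $\sigma' \in \equilib_\HH(\kmap{k}\circ\bra{\hd_\HH,\tl_\HH})$, which is precisely the premise of preservation condition~(ii) for the coalgebra map $\bra{\bra{\now_\HH,\ltr_\HH},\bra{\hd_\HH,\tl_\HH}}$ at the utility $\kmap{k}$. Its conclusion $\bra{\now_\HH,\ltr_\HH}(\sigma') \in \equilib_{F_\G\HH}(\kmap{k})$ unfolds, via the displayed description of $\equilib_{F_\G\HH}$ and the facts $\bra{\now_\HH,\ltr_\HH}(\sigma') = (\now_\HH(\sigma'),\ltr_\HH(\sigma'))$ and $\kmap{k}(y',\_) = \lambda z.\,\kmap{k}(y',z)$, into exactly statements~(i) and~(ii).

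I expect the only genuine obstacle to be the bookkeeping of the first step: one must correctly read $\G$ and $Y\ra\HH$ as the two factors of the composite, feed the output move of $\G$ as the incoming state of $Y\ra\HH$, and use both coutility-freeness and the state-independence built into $Y\ra\_$ to reduce the generic composition and modality clauses to the clean form above. Once that description is secured, the coalgebra identity for $\unfo_Y$ together with a single application of condition~(ii) closes the argument mechanically.
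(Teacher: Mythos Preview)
Your proposal is correct and follows essentially the same route as the paper's own proof: rewrite $k\circ\unfo_Y$ as $\kmap{k}\circ\bra{\hd_\HH,\tl_\HH}$, apply the equilibrium-preservation clause of the coalgebra structure map (a morphism in $\TwoOpen{}$), and then unfold $\equilib_{F_\G\HH}(\kmap{k})$. Your version simply makes explicit the two steps the paper leaves implicit---the derivation of the identity $k\circ\unfo_Y=\kmap{k}\circ\bra{\hd_\HH,\tl_\HH}$ from the $S(Y)$-coalgebra square, and the concrete description of $\equilib_{F_\G\HH}$ obtained by expanding the composition and modality clauses---but the argument is the same.
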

\begin{proof}
  Suppose $\sigma' \in \equilib_\HH(k \circ \unfo_Y)$.
  Observe that  $k \circ \unfo_Y = \kmap{k} \circ \bra{\hd_\HH,\tl_\HH}$, so
  this is equivalent to
  $\sigma' \in \equilib_\HH(\kmap{k} \circ \bra{\hd_\HH,\tl_\HH})$ and ---
  as $\bra{\hd_\HH,\tl_\HH}$ is a morphism of open games --- we obtain
  $\bra{\now_\HH,\ltr_\HH}(\sigma') \in \equilib_{F_\G \HH}(\kmap{k})$. The
  lemma now follows by spelling out the definition of
  $\equilib_{F_\G \HH}(\kmap{k})$.
\end{proof}

We are now ready to prove the key fact that $\hat{\equilib}_\HH$ is a
post-fixpoint of $\equiop$.

\begin{lemma}
\label{lem:postfix}
  Let $\sigma \in \iter{\Sigma}$ be a strategy such that
  $\sigma \in \hat{\equilib}_\HH k$ for some $k: Y^\omega \ra R$ . Then
  $\sigma \in \equiop(\hat{\equilib}_\HH)(k)$.
\end{lemma}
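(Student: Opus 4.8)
The plan is to unwind the definition of $\hat{\equilib}_\HH$ and then check the two defining conjuncts of $\sigma \in \equiop(\hat{\equilib}_\HH)(k)$ separately, using Lemma~\ref{lem:equilibH} to supply the equilibrium witnesses and Lemma~\ref{lem:playfct} to reconcile the utility functions. First I would extract from $\sigma \in \hat{\equilib}_\HH k$ a witness $\tau \in \Sigma_\HH$ with $\unfo_\Sigma(\tau) = \sigma$ and $\tau \in \equilib_\HH(k \circ \unfo_Y)$. The crucial observation is that the left-hand square of~\eqref{equ:setfinal} expresses $\unfo_\Sigma$ as a $D(Y,\Sigma_\G)$-coalgebra morphism, so that $\bra{\now,\ltr}(\sigma) = (\now_\HH(\tau),\, \unfo_\Sigma \circ \ltr_\HH(\tau))$. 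Reading off the two components gives the identities $\sigma_0 = \now_\HH(\tau)$ and $\sigma'(y) = \unfo_\Sigma(\ltr_\HH(\tau)(y))$ for every $y \in Y$; these are exactly what let me transfer equilibrium data obtained from $\HH$ back onto $\sigma_0$ and $\sigma'$.

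Next I would apply Lemma~\ref{lem:equilibH} to $\tau$, obtaining (i) $\now_\HH(\tau) \in \equilib_\G(\lambda y.\,\kmap{k}(y, P_\HH(\ltr_\HH(\tau)(y))))$ and (ii) $\ltr_\HH(\tau)(y') \in \equilib_\HH(\lambda z.\,\kmap{k}(y',z))$ for all $y' \in Y$. For the first conjunct of $\equiop(\hat{\equilib}_\HH)(k)$ I need $\sigma_0 \in \equilib_\G(\lambda y.\,k(y \cons \iter{P}(\sigma' y)))$, so since $\sigma_0 = \now_\HH(\tau)$ it suffices to show the two utility functions agree. Expanding $\kmap{k}(y,z) = k(y \cons \unfo_Y(z))$ and then invoking Lemma~\ref{lem:playfct} in the form $\unfo_Y(P_\HH(\rho)) = \iter{P}(\unfo_\Sigma(\rho))$ with $\rho = \ltr_\HH(\tau)(y)$ rewrites $\kmap{k}(y, P_\HH(\ltr_\HH(\tau)(y)))$ as $k(y \cons \iter{P}(\sigma'(y)))$, using the identity $\sigma'(y) = \unfo_\Sigma(\ltr_\HH(\tau)(y))$ from the first paragraph. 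This settles the first conjunct.

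For the second conjunct I must verify $\sigma'(y') \in \hat{\equilib}_\HH(\lambda z.\,k(y' \cons z))$ for each $y'$. Here the natural witness is $\ltr_\HH(\tau)(y') \in \Sigma_\HH$: it maps under $\unfo_\Sigma$ to $\sigma'(y')$ by the identity above, and by (ii) it lies in $\equilib_\HH(\lambda z.\,\kmap{k}(y',z))$. To conclude I only need the purely definitional equality $(\lambda z.\,k(y' \cons z)) \circ \unfo_Y = \lambda z.\,\kmap{k}(y',z)$, which is immediate from the definition of $\kmap{(\_)}$. With both conjuncts established, $\sigma \in \equiop(\hat{\equilib}_\HH)(k)$ follows.

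The main obstacle is conceptual rather than computational: correctly reading the coalgebra-morphism square~\eqref{equ:setfinal} for $\unfo_\Sigma$ to obtain the decomposition $\sigma_0 = \now_\HH(\tau)$ and $\sigma' = \unfo_\Sigma \circ \ltr_\HH(\tau)$, and then recognising that $\ltr_\HH(\tau)(y')$ is precisely the $\Sigma_\HH$-witness demanded by $\hat{\equilib}_\HH$ in the second conjunct. Once these bookkeeping identifications are in place, the two remaining verifications reduce to matching utility functions, which are handled by Lemma~\ref{lem:playfct} and the definition of $\kmap{(\_)}$ respectively.
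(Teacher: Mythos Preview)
Your proposal is correct and follows essentially the same approach as the paper's proof: extract the $\Sigma_\HH$-witness from Definition~\ref{def:Ehat}, use the $D(Y,\Sigma_\G)$-coalgebra-morphism square in~\eqref{equ:setfinal} to obtain $\sigma_0 = \now_\HH(\tau)$ and $\sigma'(y) = \unfo_\Sigma(\ltr_\HH(\tau)(y))$, apply Lemma~\ref{lem:equilibH}, and then dispatch the two conjuncts via Lemma~\ref{lem:playfct} and the definition of $\kmap{(\_)}$ exactly as you describe. Your choice of the name $\tau$ for the witness is in fact cleaner than the paper's, which overloads $\sigma'$ both for the witness and for $\ltr(\sigma)$.
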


\begin{proof}
  The assumption $\sigma \in \hat{\equilib}_\HH k$ means that there is
  some $\sigma' \in \Sigma_\HH$ such that
  $\unfo_\Sigma (\sigma') = \sigma$ and
  $\sigma' \in \equilib_\HH(k \circ \unfo_Y)$.
%
  We need to verify that
  \begin{enumerate}[(a)]
  \item\label{P1}
    $\now(\sigma) \in \equilib_\G(\lambda y. k(y \cons \iter{P}
    \ltr(\sigma)(y)))$, and
  \item\label{P2}
    for all $y' \in Y$ we have $\ltr(\sigma)(y') \in \hat{\equilib}_\HH(\lambda
    z. k(y' \cons z))$.
  \end{enumerate}
  For (\ref{P1}), note that by the diagram for strategies in
  (\ref{equ:setfinal}) we have
  $\now (\sigma) = \now(\unfo_\Sigma(\sigma')) =
  \now_\HH(\sigma')$. Using the first item of
  Lemma~\ref{lem:equilibH}, we obtain
  \begin{eqnarray*}
    \now(\sigma) & \in &  \equilib_\G( \lambda y. \kmap{k}(y,P_\HH(\ltr_\HH(\sigma')(y))))\\
                 & = & \equilib_\G (\lambda y. k(y \cons\unfo_Y(P_\HH(\ltr_\HH(\sigma')(y))))) \\
                 & \stackrel{\mbox{\tiny Lemma~\ref{lem:playfct}}}{=} &
                                                                        \equilib_\G (\lambda y. k(y\cons\iter{P}(\unfo_\Sigma(\ltr_\HH(\sigma')(y))))) \\
                 & \stackrel{\mbox{\tiny (\ref{equ:setfinal})}}{=} & \equilib_\G (\lambda y. k(y\cons\iter{P}( \ltr
                                                                     (\unfo_\Sigma(\sigma'))(y)))) \\
                 & = & \equilib_\G (\lambda y. k(y\cons\iter{P}( \ltr
                       (\sigma)(y))))
  \end{eqnarray*}
  which establishes~(\ref{P1}).

  For (\ref{P2}), it suffices to define for each $y' \in Y$ a suitable
  strategy $\sigma_{y'}' \in \Sigma_\HH$ such that
  $\unfo_\Sigma \sigma_{y'}' = \ltr(\sigma)(y')$ and
  $\sigma_{y'}' \in \equilib_\HH(\lambda z.k(y'\cons\unfo_Y z))$. We
  claim that for an arbitrary $y' \in Y$ the strategy
  $\sigma_{y'}' \isdef \ltr_\HH(\sigma')(y')$ meets these
  conditions. The first condition is again an easy consequence of
  (\ref{equ:setfinal}) and the fact that
  $\unfo_\Sigma (\sigma') = \sigma$.  For the second condition we note
  that $\sigma_y' \in \equilib_\HH(\lambda z.\kmap{k}(y',z))$ as a
  consequence of $\sigma' \in \equilib_\HH(k \circ \unfo_Y)$ and the
  second item of Lemma~\ref{lem:equilibH}.  The claim follows now from
  $ \sigma_{y'}' \in \equilib_\HH(\lambda z.\kmap{k}(y',z)) =
  \equilib_\HH(\lambda z.(y' \cons \unfo_Y(z)))$.
\end{proof}

We are now ready to prove the main theorem of this section.

\begin{theorem}
  Let $\G:(1,R) \ra (Y,R)$ be an open game and let $\G_\omega$ be its
  $\omega$-iteration.  Then $\G_\omega$ is a final $F_\G$-coalgebra.
\end{theorem}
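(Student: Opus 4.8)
The plan is to build the required $F_\G$-coalgebra morphism $\bra{\unfo_\Sigma,\unfo_Y}:\HH \ra \G_\omega$ out of the set-level finality already recorded, and then to certify that it is a morphism in $\TwoOpen{}$ using the lemmas above. First I would define the two underlying functions $\unfo_\Sigma:\Sigma_\HH \ra \iter{\Sigma}$ and $\unfo_Y:Y_\HH \ra Y^\omega$ as the unique $D(Y,\Sigma_\G)$- and $S(Y)$-coalgebra morphisms determined by the diagrams in~\eqref{equ:setfinal}; these exist and are unique precisely because $\iter{\Sigma} = Y^* \ra \Sigma_\G$ and $Y^\omega$ carry the final coalgebras of those two set functors. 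This single construction simultaneously discharges the uniqueness obligation, since any $F_\G$-coalgebra morphism must, as already observed, restrict to coalgebra morphisms making~\eqref{equ:setfinal} commute.

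It then remains to verify that the pair $\bra{\unfo_\Sigma,\unfo_Y}$ is genuinely a morphism of open games and that it commutes with the coalgebra maps. The commutation with the coalgebra maps is immediate: the coalgebra structure of $F_\G$ on the strategy and move components is exactly $\bra{\now,\ltr}$ and $\bra{\hd,\tl}$, so the two squares of~\eqref{equ:setfinal} are literally the two commuting squares demanded by the coalgebra-morphism diagram. Thus the work concentrates on checking conditions (i) and (ii) of the definition of a morphism between games.

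Condition (i), the play-function compatibility $\unfo_Y(P_\HH\,\sigma) = \iter{P}(\unfo_\Sigma\,\sigma)$, is exactly Lemma~\ref{lem:playfct}, so nothing further is needed there. Condition (ii) is the crux and is where I would invest the effort: I must show that $\sigma' \in \equilib_\HH(k \circ \unfo_Y)$ implies $\unfo_\Sigma(\sigma') \in \iter{\equilib}(k)$. For this I would pass through the indexed predicate $\hat{\equilib}_\HH$ of Definition~\ref{def:Ehat}. By Lemma~\ref{lem:postfix} it is a post-fixpoint of $\equiop$, and since $\iter{\equilib}$ is the greatest fixpoint of $\equiop$ (hence the greatest post-fixpoint), Lemma~\ref{lem:gfp}(ii) yields $\hat{\equilib}_\HH \leq \iter{\equilib}$. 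Unwinding Definition~\ref{def:Ehat}, the hypothesis $\sigma' \in \equilib_\HH(k \circ \unfo_Y)$ places $\unfo_\Sigma(\sigma')$ in $\hat{\equilib}_\HH(k)$, and therefore in $\iter{\equilib}(k)$, which is precisely condition (ii).

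The main obstacle is this last step, because $\iter{\equilib}$ is only available as a greatest fixpoint and not through any explicit formula, so membership in $\iter{\equilib}(k)$ cannot be checked directly. The essential move is therefore the coinductive one: exhibit a post-fixpoint of $\equiop$ containing the image of $\equilib_\HH$ under the unfolding and appeal to the greatest-post-fixpoint principle. All of the genuine combinatorial content --- reconciling the first-round equilibrium condition of $\G$ with the deferred equilibrium condition of later rounds across the unfolding --- has already been isolated in Lemma~\ref{lem:postfix} (via Lemma~\ref{lem:equilibH}), so at the level of the theorem itself the argument is an assembly of the preceding lemmas rather than a fresh calculation.
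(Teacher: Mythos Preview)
Your proposal is correct and follows essentially the same route as the paper: define $\unfo_\Sigma$ and $\unfo_Y$ via set-level finality (which handles uniqueness and makes the coalgebra square commute), invoke Lemma~\ref{lem:playfct} for the play-function condition, and then use the post-fixpoint argument via $\hat{\equilib}_\HH$ and Lemmas~\ref{lem:postfix} and~\ref{lem:gfp}(ii) to obtain the equilibrium condition. Your added commentary on why the coinductive step is forced is accurate and matches the paper's intent.
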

\begin{proof}
  By our discussion at the beginning of this subsection it suffices to
  show that for an arbitrary $F_\G$-coalgebra
  $(\HH,\brak{\bra{\now_\HH,\ltr_\HH},\bra{\hd_\HH,\tl_\HH}})$ the map
  $\brak{\unfo_\Sigma,\unfo_Y}$ consisting of the coalgebra morphisms
  in~\eqref{equ:setfinal} is a morphism of open
  games. Lemma~\ref{lem:playfct} shows that
  $\brak{\unfo_\Sigma,\unfo_Y}$ satisfies the morphism condition wrt
  play functions. For checking the equilibrium condition consider an
  arbitrary $\sigma' \in \Sigma_\HH$ and a $k:Y^\omega \ra R$ such that
  $\sigma' \in \equilib_\HH(k \circ \unfo_Y)$.  Then clearly we have
  $\unfo_\Sigma (\sigma') \in \hat{\equilib}_\HH(k)$.  As
  $\hat{\equilib}_\HH$ is a post-fixpoint of $\equiop$ by
  Lemma~\ref{lem:postfix}, we have
  $\hat{\equilib}_\HH(k) \subseteq \iter{\equilib}(k)$, and thus
  $\unfo_\Sigma (\sigma') \in \iter{\equilib}(k)$ as required.
\end{proof}


\section{Conclusions and Future Work}
\label{sec:conclusion}

The main contributions of this paper are on the one hand a notion of
morphism between open games and --- based on this notion --- the
representation of the infinite iteration of a given game as a final
coalgebra.  This provides a first extension of the compositionality
results from~\citet{compositionalGames} to infinitely repeated
games. Nevertheless a number of challenges remain: firstly, we need to
extend our construction to state-full games and to games with
non-trivial coutility function.  The former seems straightforward, at
least if we confine ourselves to games that share the same state space
$X$.  Secondly, we need to make the link of our work to
subgame-perfect equilibria more explicit.  Finally, after having
represented infinitely repeated games as final coalgebra, we will be
able to provide new reasoning tools for such games based on
coinduction and coalgebraic logics.

\paragraph{Acknowledgements}
This work was supported by EPSRC grants EP/M016951/1, EP/K023837/1,
and EP/N015843/1.

\bibliographystyle{elsarticle-harv}
\bibliography{openGames}
\newpage
\appendix

\section{Omitted Proofs}

\begin{proof}[\hypertarget{proof:coalg}{Proof of Prop.~~\ref{prop:iscoalg}}]
  The type of $\alpha$ is ok, we need to show that $\alpha$ is a
  morphism of open games.  Firstly we need to check that $\alpha$
  interacts well with the play functions, i.e., we need to check that
  for all $\sigma \in \iter{\Sigma}$ we have
  $\bra{\hd,\tl}(\iter{P} \sigma) = P_{(Y \ra \G_\omega) \circ
    \G}(\bra{\now,\ltr}(\sigma))$.  This is routine.

  Secondly we need to verify that $\alpha$ satisfies the equilibrium
  condition, i.e., we need to check that for all
  $\sigma \in \iter{\Sigma}$ and all $k:Y \times Y^\omega \ra R$ we
  have that if $\sigma \in \iter{\equilib}(k \circ \bra{\hd,\tl})$ then
  $\bra{\now,\ltr}(\sigma) \in \equilib_{(Y \ra \G_\omega) \circ
    \G}(k)$. To see this consider arbitrary $\sigma$ and $k$ with
  $\sigma \in \iter{\equilib}(k \circ \bra{\hd,\tl})$.  By
  Lemma~\ref{lem:gfp} we have
  $\now(\sigma) \in \equilib_\G(\lambda y. k(y\cons\iter{P}
  \ltr(\sigma)(y)))$ and
  $\forall y' \in Y. \; \ltr(\sigma)(y') \in \iter{\equilib} (\lambda
  z. k(y'\cons z))$ which is equivalent to
  $\bra{\now,\ltr}(\sigma) = \bra{\now(\sigma),\ltr(\sigma)} \in
  \equilib_{(Y \ra \G_\omega) \circ \G}(k)$ as required.
\end{proof}

\begin{proof}[\hypertarget{proof:bisim}{Proof of Prop.~\ref{lem:playfct}}]
  We only show that the relation $Q$ as defined on page~\pageref{lem:playfct} is a bisimulation.
  To this aim we consider an arbitrary
  $\sigma \in \Sigma_\HH$. We calculate
  \begin{eqnarray*}
    \bra{\hd,\tl}(\unfo_Y(P_\HH(\sigma))) & = & S(Y)(\unfo_Y)(\bra{\hd_\HH,\tl_\HH}P_\HH(\sigma)) \\
                                        & = &  S(Y)(\unfo_Y) \left( P_{F_\G \HH} (\bra{\now_\HH,\ltr_\HH}(\sigma))\right) \\
                                        & = &  S(Y)(\unfo_Y) \left(  P_\G \now_\HH(\sigma) \cons P_\HH (\ltr_\HH(\sigma) (P_\G \now_\HH(\sigma))\right) \\
                                        & = & \bra{P_\G \now_\HH(\sigma), \unfo_Y(P_\HH (\ltr_\HH(\sigma) (P_\G \now_\HH(\sigma)))} \\
                                        & = & \bra{P_\G \now (\unfo_\Sigma(\sigma)), \unfo_Y(P_\HH(\ltr_\HH(\sigma) (P_\G \now_\HH(\sigma))))}
  \end{eqnarray*}
  and
  \begin{eqnarray*}
    \bra{\hd,\tl}(\iter{P}(\unfo_\Sigma (\sigma))) & \stackrel{\mbox{\tiny Prop.~\ref{prop:iscoalg}}}{=} & P_{F_\G \G_\omega}(\bra{\now,\ltr}(\unfo_\Sigma (\sigma)))\\
                                                   & \stackrel{\mbox{\tiny Def.}}{=} &  \bra{P_\G \now(\unfo_\Sigma(\sigma)),\iter{P} (\ltr(\unfo_\Sigma(\sigma)(P_\G \now(\unfo_\Sigma(\sigma))))} \\
                                                   & \stackrel{\mbox{\tiny \eqref{equ:setfinal}}}{=} & \bra{P_\G \now(\unfo_\Sigma(\sigma)),\iter{P} (\unfo_\Sigma(\ltr_\HH(\sigma) (P_\G \now_\HH(\sigma))))}.
  \end{eqnarray*}
  This implies that
  $\hd (\unfo_Y(P_\HH(\sigma))) = \hd (\iter{P}(\unfo_\Sigma (\sigma)))$
  and that
  \[
    (\tl (\unfo_Y(P_\HH(\sigma))),\tl(\iter{P}(\unfo_\Sigma (\sigma)))) \in Q
  \]
  as required.
\end{proof}


\end{document}